\newtheorem{theorem}{Theorem}
\newtheorem{proposition}{Proposition}
\newtheorem{lemma}[theorem]{Lemma}
\def\x{{\mathbf x}}
\newcommand{\mat}[1]{\mathbf{#1}}
\newcommand{\norm}[1]{\left\lVert\mat{#1}\right\rVert}
\newcommand{\mscp}[2]{\big(\mat{#1}^{\top} \mat{#2}\big)}
\newcommand{\tr}[1]{\, \mathrm{tr} \{{#1}\}}
\title{Kernel computations from large-scale random features obtained by optical processing units
}
\author{\begin{tabular}{c}Ruben Ohana$^{\dag}$, Jonas Wacker$^\star$, Jonathan Dong$^{\dag,\ddagger}$, S\'ebastien Marmin$^\star$ \\
Florent Krzakala$^{\dag, \S}$, Maurizio Filippone$^\star$, Laurent Daudet$^\S$\end{tabular} \\
$\dag$ Laboratoire de Physique de l'Ecole Normale Sup\'erieure, ENS, Universit\'e PSL, CNRS \\ \normalsize Sorbonne Universit\'e, Universit\'e de Paris, F-75005 Paris, France \\ 
$\star$ Department of Data Science, EURECOM, 450 Route des Chappes, 06410 Biot, France \\ 
$\ddagger$ Laboratoire Kastler-Brossel, ENS, Universit\'e PSL, CNRS, Sorbonne Universit\'e, \\  \normalsize Coll\`ege de France, Universit\'e de Paris, F-75005 Paris, France \\ 
$\S$ LightOn, 2 rue de la Bourse, F-75002 Paris, France}
\begin{document}

\maketitle

\begin{abstract}
Approximating kernel functions with random features (RFs) has been a successful application of random projections for nonparametric estimation.
However, performing random projections presents computational challenges for large-scale problems.
Recently, a new optical hardware called Optical Processing Unit (OPU) has been developed for fast and energy-efficient computation of large-scale RFs in the analog domain. 
More specifically, the OPU performs the multiplication of input vectors by a large random matrix with complex-valued i.i.d. Gaussian entries, followed by the application of an element-wise squared absolute value operation -- this last nonlinearity being intrinsic to the sensing process. 
In this paper, we show that this operation results in a dot-product kernel that has connections to the polynomial kernel, and we extend this computation to arbitrary powers of the feature map.
Experiments demonstrate that the OPU kernel and its RF approximation achieve competitive performance in applications using kernel ridge regression and transfer learning for image classification.
Crucially, thanks to the use of the OPU, these results are obtained with time and energy savings.
\end{abstract}
\keywords{Kernel methods, nonparametric estimation, optical computing, random features, kernel ridge regression.}
\section{Introduction}

Kernel methods represent a successful class of Machine Learning models, achieving state-of-the-art performance on a variety of tasks with theoretical guarantees \cite{scholkopf2002learning,rudi2017falkon,caponnetto2007optimal}. 
Applying kernel methods to large-scale problems, however, poses computational challenges, and this has motivated a variety of contributions to develop them at scale; see, e.g., \cite{rudi2017falkon,smola2000sparse,zhang2013divide,NIPS2015_5936, EURECOM+5214}. 

Consider a supervised learning task, and let $\{\mat{x}_1, \ldots, \mat{x}_n\}$ be a set of $n$ inputs with $\mat{x}_i \in \mathbb{R}^d$ associated with a set of labels $\{t_1, \ldots, t_n\}$. 
In kernel methods, it is possible to establish a mapping between inputs and labels by first mapping the inputs to a high-dimensional (possibly infinite dimensional) Hilbert space $\mathcal{H}$ using a nonlinear feature map $\varphi:\mathbb{R}^d \rightarrow \mathcal{H}$, and then to apply the model to the transformed data. 
What characterizes these methods is that the mapping $\varphi(\cdot)$ does not need to be specified and can be implicitly defined by choosing a kernel function $k(\cdot, \cdot)$. 
While kernel methods offer a flexible class of models, they do not scale well with the number $n$ of data points in the training set, as one needs to store and perform algebraic operations with the kernel matrix $\mat{K}$, whose entries are $K_{ij} = k(\mat{x}_i, \mat{x}_j)$, and which require $\mathcal{O}(n^2)$ storage and $\mathcal{O}(n^3)$ operations.

In a series of celebrated papers~\cite{rahimi2008random,rahimi2009weighted}, Rahimi and Recht have proposed approximation techniques of the kernel function using random features (RFs), which are based on random projections of the original features followed by the application of a nonlinear transformation. 
In practice, the kernel function is approximated by means of the scalar product between finite-dimensional random maps $\phi:\mathbb{R}^d\rightarrow\mathbb{R}^D$:
\begin{equation}
\label{eq:random_features_approximation}
    k(\mat{x}_i,\mat{x}_j)=\langle\varphi(\mat{x}_i),\varphi(\mat{x}_j)\rangle_\mathcal{H} \approx \phi(\mat{x}_i)^\top \phi(\mat{x}_j) 
\end{equation}
The RF-based approximation turns a kernel-based model into a linear model with a new set of nonlinear features $\phi(\mat{x})$; 
as a result, 
the computational complexity is reduced from $\mathcal{O}(n^3)$ to $\mathcal{O}(n d D)$ to construct the random features and $\mathcal{O}(D^3)$ to optimize the linear model, where $D$ is the RF dimension and $n$ the number of data points. 
Furthermore, there is no need to allocate the kernel matrix, reducing the storage from $\mathcal{O}(n^2)$ to $\mathcal{O}(n D) + \mathcal{O}(D^2) $.
Unless approximation strategies to compute random features are used, e.g., \cite{le2013fastfood}, computing RFs is one of the main computational bottlenecks.

A completely different approach was pioneered in Saade {\it et al.}~\cite{saade_random_2016}, where the random projections are instead made via an analog optical device -- the  Optical Processing Unit (OPU) -- that performs these random projections literally at the speed of light and without having to store the random matrix in memory. 
Their results demonstrate that the OPU makes a significant contribution towards making kernel methods more practical for large-scale applications with the potential to drastically decrease computation time and memory, as well as power consumption. 
The OPU has also been applied to other frameworks like reservoir computing \cite{dong2018scaling, dong2019optical} and anomaly detection \cite{keriven2018newma}.

Building on the milestone work of \cite{saade_random_2016}, the goal of the present contribution is threefold: a) we derive in full generality the kernel to which the dot product computed by the OPU RFs converges, generalizing the earlier computation of \cite{saade_random_2016} to a larger class of kernels; b) we present new examples and a benchmark of applications for the kernel of the OPU; and c) we give a detailed comparison of the running time and energy consumption between the OPU and a last generation GPU.

\vspace{-0.09cm}

\section{The Optical Processing Unit}

\begin{figure}[h]
    \centering
    \includegraphics[width=0.4\textwidth]{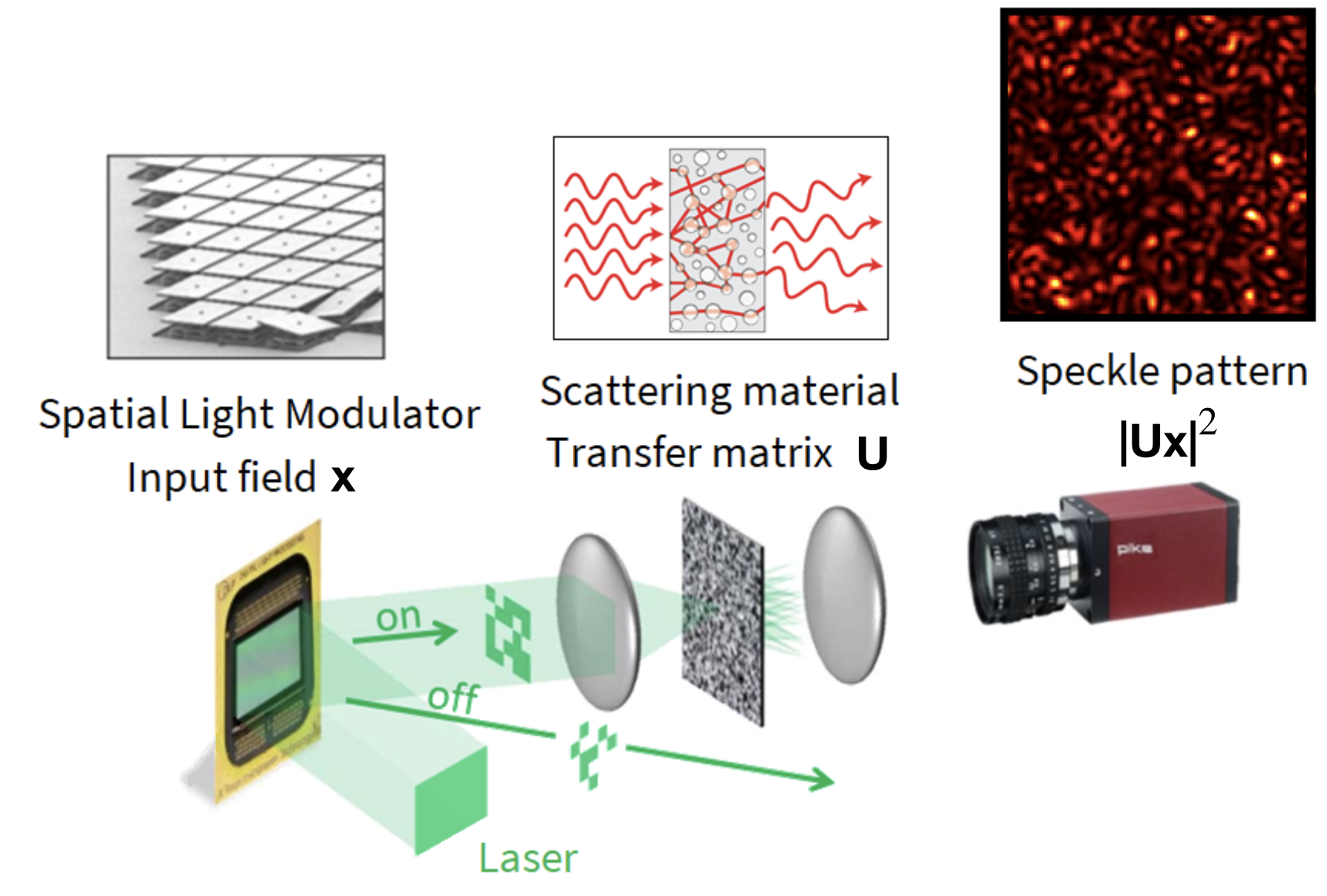}
    \caption{Experimental setup of the Optical Processing Unit (modified with permission from \cite{saade_random_2016}). The data vector is encoded in the coherent light from a laser using a DMD. Light then goes through a scattering medium and a speckle pattern is measured by a camera.}
    \label{fig:opu}
\end{figure}

The principle of the random projections performed by the Optical Processing Unit (OPU) is based on the use of a heterogeneous material that scatters the light that goes through it, see Fig.~\ref{fig:opu} for the experimental setup. The data vector $\mat{x} \in \mathbb{R}^d$ is encoded into light using a digital micromirror device (DMD). This encoded light then passes through the heterogeneous medium, performing the random matrix multiplication. As discussed  in \cite{liutkus2014imaging}, light going through the scattering medium follows many extremely complex paths, that depend on refractive index inhomogeneities at random positions. For a fixed scattering medium, the resulting process is still linear, deterministic, and reproducible. Reproducibility is important as all our data vectors need to be multiplied by the same realisation of the random matrix.

After going through the "random" medium, we observe a speckle figure on the camera, where the light intensity at each point is modelled by a sum of the components of $\mat{x}$ weighted by random coefficients. 
Measuring the intensity of light induces a non-linear transformation of this sum, leading to:
\pagebreak
\addtocounter{proposition}{-1}
\begin{proposition}
Given a data vector $\mat{x} \in \mathbb{R}^d$, the random feature map performed by the Optical Processing Unit is:
\begin{equation}
\label{eq:OPU-featuremap}
    \phi(\mat{x}) = \frac{1}{\sqrt{D}}|\mat{Ux}|^2
\end{equation}
where $\mat{U}\in\mathbb{C}^{D\times d}$ is a complex Gaussian random matrix whose elements $U_{i,j}\sim\mathcal{CN}(0,1)$, the variance being set to one without loss of generality, and depends on a multiplicative factor combining laser power and attenuation of the optical system. We will name these  RFs optical random features.
\end{proposition}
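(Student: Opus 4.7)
The plan is to justify the functional form of $\phi(\mat{x})$ by tracing the three physical steps performed by the apparatus and arguing that each step has a clean mathematical model. Since this is a modelling proposition rather than a purely mathematical theorem, my "proof'' is really a physically motivated derivation, whose ingredients are (i) linearity of coherent light propagation through a fixed dielectric, (ii) complex Gaussianity of the effective transmission matrix of a strongly scattering medium, and (iii) the square-law detection of an intensity camera.

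First I would set up notation for the coherent optical field. Let $E_{\text{in}} \in \mathbb{C}^d$ denote the complex amplitude of the field imprinted on the DMD, so that the encoding $\mat{x} \mapsto E_{\text{in}}$ is (up to a global phase/intensity constant absorbed later) equal to $\mat{x}$ itself, since the DMD modulates amplitude pixel by pixel. Maxwell's equations in a non-absorbing, non-magnetic, linear dielectric are linear and deterministic, so for a fixed realization of the scattering medium there exists a deterministic complex matrix $\mat{U} \in \mathbb{C}^{D \times d}$ (the transmission matrix from the $d$ input modes of the DMD to the $D$ output modes collected by the camera) such that the field at the camera plane is $E_{\text{out}} = \mat{U} \mat{x}$. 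This step establishes the claimed linear part.

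Next I would justify the distributional assumption $U_{ij} \sim \mathcal{CN}(0,1)$. Each entry $U_{ij}$ is a coherent sum of contributions from an enormous number of independent random multiple-scattering paths between input mode $j$ and output mode $i$, each contributing a phasor of comparable amplitude and essentially uniform random phase; invoking a central-limit-type argument (standard in the mesoscopic optics and random-matrix-theory literature on strongly scattering media, e.g. as discussed in \cite{liutkus2014imaging,saade_random_2016}), the real and imaginary parts of $U_{ij}$ are jointly Gaussian with zero mean and equal variance, and different pairs $(i,j)$ become effectively independent because they correspond to macroscopically distinct ensembles of paths. The overall scale of the variance depends on laser power, attenuation, and collection efficiency, and is absorbed into a multiplicative constant so we may set it to $1$ without loss of generality.

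Finally I would handle the nonlinearity and the normalization. Optical frequencies ($\sim 10^{14}$ Hz) are far beyond the bandwidth of any detector, so the camera cannot record $E_{\text{out}}$ directly; what it measures at pixel $i$ is the time-averaged intensity, which for a monochromatic coherent field is $|(\mat{U}\mat{x})_i|^2$. Stacking these $D$ pixel values and inserting a $1/\sqrt{D}$ prefactor (a standard random-features normalization, chosen so that inner products $\phi(\mat{x})^\top \phi(\mat{y})$ have an $\mathcal{O}(1)$ limit as $D \to \infty$, which is the regime analyzed in the sequel) yields exactly \eqref{eq:OPU-featuremap}. The delicate step — and the only one that is not a direct physical tautology — is the Gaussianity claim in the second paragraph; I would expect it to be the main conceptual obstacle, since it relies on the diffusive multiple-scattering assumption and on experimental verification rather than on a self-contained proof within the paper.
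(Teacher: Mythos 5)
Your derivation is correct and follows essentially the same route as the paper, which presents this proposition not with a formal proof but with the same three-step physical justification (DMD amplitude encoding, a fixed linear transmission matrix whose entries are modelled as i.i.d.\ $\mathcal{CN}(0,1)$ by appeal to multiple-scattering arguments via the cited reference \cite{liutkus2014imaging}, and square-law intensity detection yielding the element-wise squared modulus). You correctly identify the Gaussianity and independence of the entries of $\mat{U}$ as the one step that rests on physical modelling and experimental validation rather than on a self-contained argument, which is exactly how the paper treats it.
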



\section{Computing the Kernel}
When we map two data points $\mat{x},\mat{y}\in\mathbb{R}^d$ into a feature space of dimension $D$ using the optical RFs of Eq.~\ref{eq:OPU-featuremap}, we have to compute the following to obtain the associated kernel $k_2$:
\begin{align}
    k_2(\mat{x},\mat{y})&\approx \phi(\mat{x})^{\top} \phi(\mat{y}) 
    =\frac{1}{D}\sum_{j=1}^D |\mat{x}^{\top} \mat{u}^{(j)}|^2 |\mat{y}^{\top} \mat{u}^{(j)}|^2\\
    &\overset{D\to +\infty}{=}\int |\mat{x}^{\top} \mat{u}|^2 |\mat{y}^{\top} \mat{u}|^2\mu(\mat{u})\textrm{d}\mat{u}
    \label{eq: K2tendsto} 
\end{align}
\vspace{-0.1cm}
with $\mat{U}=[\mat{u}^{(1)},\ldots,\mat{u}^{(D)}]^{\top}$ and $\mat{u}^{(j)}\in\mathbb{R}^d, \forall j\in\{1,..,D\}$.
\begin{theorem}
\label{theo:expo2}
The kernel $k_2$ approximated by the dot product of optical random features of Eq.~\ref{eq:OPU-featuremap} is given by:
\begin{equation}
\label{eq:kernelexpo2}
    k_2(\mat{x},\mat{y}) = \norm{x}^2 \norm{y}^2 + \mscp{x}{y}^2
\end{equation}
where the norm is the $l_2$ norm.
\end{theorem}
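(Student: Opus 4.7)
The plan is to recognize the limiting integral as the fourth-order moment $\mathbb{E}[|\mathbf{x}^\top \mathbf{u}|^2 |\mathbf{y}^\top \mathbf{u}|^2]$ of a single complex Gaussian vector $\mathbf{u}$ with i.i.d.\ $\mathcal{CN}(0,1)$ entries, and to reduce it to a real Gaussian moment calculation. First I would write $\mathbf{u} = \mathbf{a} + i \mathbf{b}$ with $\mathbf{a}, \mathbf{b} \in \mathbb{R}^d$ independent, each coordinate $\mathcal{N}(0, 1/2)$ so that $\mathbb{E}[|u_k|^2]=1$ as required by the $\mathcal{CN}(0,1)$ convention. Since $\mathbf{x}, \mathbf{y}$ are real, $|\mathbf{x}^\top \mathbf{u}|^2 = (\mathbf{x}^\top \mathbf{a})^2 + (\mathbf{x}^\top \mathbf{b})^2$ and likewise for $\mathbf{y}$.

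Next I would expand the product $|\mathbf{x}^\top \mathbf{u}|^2 |\mathbf{y}^\top \mathbf{u}|^2$ into four terms. Introducing $A=\mathbf{x}^\top\mathbf{a}$, $B=\mathbf{x}^\top\mathbf{b}$, $C=\mathbf{y}^\top\mathbf{a}$, $D=\mathbf{y}^\top\mathbf{b}$, the independence $(A,C)\perp(B,D)$ immediately splits the cross terms: $\mathbb{E}[A^2 D^2] = \mathbb{E}[A^2]\mathbb{E}[D^2]$ and $\mathbb{E}[B^2 C^2] = \mathbb{E}[B^2]\mathbb{E}[C^2]$. The two ``diagonal'' terms $\mathbb{E}[A^2 C^2]$ and $\mathbb{E}[B^2 D^2]$ are handled by Isserlis' (Wick's) theorem for real centered Gaussians: $\mathbb{E}[A^2 C^2] = \mathbb{E}[A^2]\mathbb{E}[C^2] + 2\,\mathbb{E}[AC]^2$. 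The required second moments are $\mathbb{E}[A^2]=\tfrac{1}{2}\|\mathbf{x}\|^2$, $\mathbb{E}[C^2]=\tfrac{1}{2}\|\mathbf{y}\|^2$, and $\mathbb{E}[AC]=\tfrac{1}{2}\mathbf{x}^\top\mathbf{y}$ (and symmetrically for $B,D$).

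Plugging everything in, the four terms combine to
\begin{equation*}
4 \cdot \tfrac{1}{4}\|\mathbf{x}\|^2\|\mathbf{y}\|^2 \;+\; 2 \cdot 2 \cdot \tfrac{1}{4}(\mathbf{x}^\top\mathbf{y})^2 \;=\; \|\mathbf{x}\|^2\|\mathbf{y}\|^2 + (\mathbf{x}^\top\mathbf{y})^2,
\end{equation*}
which is the claimed formula. I would also remark that the convergence of the finite-$D$ sum to the integral in Eq.~\ref{eq: K2tendsto} follows from the strong law of large numbers applied to the i.i.d.\ summands $|\mathbf{x}^\top\mathbf{u}^{(j)}|^2 |\mathbf{y}^\top\mathbf{u}^{(j)}|^2$, which have finite mean by the moment computation just performed.

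The main obstacle is really a bookkeeping one: making sure the $\mathcal{CN}(0,1)$ normalization (variance $1/2$ in each of the real/imaginary parts) is applied consistently, so that the factors of $1/2$ combine correctly and no stray $\sigma^2$ coefficients appear. An alternative route that avoids Isserlis would be to diagonalize the $2\times 2$ covariance of $(A,C)$ (and of $(B,D)$), reducing the diagonal terms to moments of independent chi-squared variables; this gives the same answer but is less transparent than the direct Wick expansion.
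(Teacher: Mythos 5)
Your proof is correct and is essentially the paper's own alternative proof of Theorem~\ref{theo:expo2} (Appendix~A.2): the same decomposition of $\mat{u}$ into independent real and imaginary Gaussian parts, the same four-term expansion with the cross terms factoring by independence, and Isserlis'/Wick's theorem playing exactly the role of the cited Magnus (1978) quadratic-form moment identity, with the $\sigma^{*2}=1/2$ bookkeeping handled correctly. The paper's primary proof instead uses rotational invariance to reduce to two complex coordinates and computes the moments directly, but both routes are sound and yield the same result.
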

\begin{proof}


By rotational invariance of the complex Gaussian distribution, we can fix $\mat{x} = \norm{x} \mat{e_1}$ and $\mat{y} = \norm{y}(\mat{e_1} \cos \theta + \mat{e_2} \sin\theta)$, with $\theta$ being the angle between $\mat{x}$ and $\mat{y}$, $\mat{e_1}$ and $\mat{e_2}$ being two orthonormal vectors. Letting $\mat{e_i}^\top \mat{u}=u_i\sim \mathcal{CN}(0,1)$, $i=1,2$
and $u_1^*$ be the complex conjugate of  $u_1$, we obtain:
\begin{align*}
    k_2(\mat{x},\mat{y}) &= \norm{x}^2 \norm{y}^2 \int|u_1|^2 |u_1\cos\theta+u_2\sin\theta|^2 \textrm{d}\mu(\mat{u}) \\
   &=\norm{x}^2 \norm{y}^2
    \int\bigg(|u_1|^4 \cos^2 \theta+|u_1|^2|u_2|^2 \sin^2 \theta + 2|u_1|^2\textrm{Re}(u_1^*u_2)\cos\theta\sin\theta\bigg)\textrm{d}\mu(u_1)\textrm{d}\mu(u_2)
\end{align*}
\noindent By a parity argument, the third term in the integral vanishes, and the remaining ones can be explicitly computed, yielding:
\begin{equation*}
    k_2(\mat{x},\mat{y}) = \norm{x}^2 \norm{y}^2(1+\cos^2\theta)=\norm{x}^2 \norm{y}^2 + \mscp{x}{y}^2
\end{equation*}
Two extended versions of the proof are presented in Appendix \ref{appendix:A}.
\end{proof}
Numerically, one can change the exponent of the feature map to $m\in\mathbb{R}^+$, which, using notations of Eq.~\ref{eq:OPU-featuremap}, becomes:
\begin{equation}
\label{eq:FMarbexpo}
    \phi (\mat{x}) = \frac{1}{\sqrt{D}}|\mat{Ux}|^m
\end{equation}
\pagebreak
\begin{theorem}
\label{th:kernel2k}
When the exponent $m$ is even, i.e. $m=2s$, $\forall s \in \mathbb{N}$, the dot product of feature maps of Eq.~\ref{eq:FMarbexpo} tends to the kernel $k_{2s}$ (for $D\rightarrow \infty$):
\begin{equation}
k_{2s}(\mat{x},\mat{y}) = \norm{x}^m\norm{y}^m \sum_{i=0}^s (s!)^2 {s \choose i}^2 \frac{\mscp{x}{y}^{2i}}{\norm{x}^{2i}\norm{y}^{2i}}
\label{eq:kernel2k}
\end{equation}
The proof is given in Appendix \ref{annex:proofevenexp}. Moreover, a generalization $\forall m \in\mathbb{R}^+$ can be established.
\end{theorem}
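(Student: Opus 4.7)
The plan is to mirror the strategy used for Theorem~\ref{theo:expo2}: exploit rotational invariance to reduce the $d$-dimensional integral to a two-variable complex Gaussian moment, expand the $2s$-th powers combinatorially, and collapse the sum using the moment formula for $\mathcal{CN}(0,1)$. By rotational invariance of $\mu$, I would again fix $\mat{x}=\norm{x}\mat{e_1}$ and $\mat{y}=\norm{y}(\mat{e_1}\cos\theta+\mat{e_2}\sin\theta)$, and denote $u_i=\mat{e_i}^{\top}\mat{u}$ for $i=1,2$. These are independent $\mathcal{CN}(0,1)$ variables, and the kernel collapses to
\begin{equation*}
k_{2s}(\mat{x},\mat{y})=\norm{x}^{2s}\norm{y}^{2s}\int |u_1|^{2s}\,|u_1\cos\theta+u_2\sin\theta|^{2s}\,\textrm{d}\mu(u_1)\textrm{d}\mu(u_2).
\end{equation*}

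Next I would expand the second factor as $(u_1\cos\theta+u_2\sin\theta)^{s}(u_1^{*}\cos\theta+u_2^{*}\sin\theta)^{s}$ via the binomial theorem on each piece, and multiply by $|u_1|^{2s}=u_1^{s}(u_1^{*})^{s}$. The resulting double sum consists of monomials of the form $u_1^{s+k}(u_1^{*})^{s+l}u_2^{s-k}(u_2^{*})^{s-l}$ with coefficients $\binom{s}{k}\binom{s}{l}\cos^{k+l}\theta\sin^{2s-k-l}\theta$. The complex Gaussian moment identity $\int u^{a}(u^{*})^{b}\,\textrm{d}\mu(u)=a!\,\delta_{a,b}$ forces $k=l$ and integrates the two factors independently, leaving
\begin{equation*}
k_{2s}(\mat{x},\mat{y})=\norm{x}^{2s}\norm{y}^{2s}\sum_{k=0}^{s}\binom{s}{k}^{2}(s+k)!\,(s-k)!\,\cos^{2k}\theta\,\sin^{2(s-k)}\theta.
\end{equation*}

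The final step is to recast this intermediate expression as a polynomial in $\cos^{2}\theta$ alone. Writing $\sin^{2(s-k)}\theta=(1-\cos^{2}\theta)^{s-k}$, expanding by the binomial theorem, and matching the coefficient of $\cos^{2i}\theta$ on each side reduces the claim to the combinatorial identity $\sum_{k=0}^{i}(-1)^{i-k}\binom{i}{k}\binom{s+k}{s}=\binom{s}{i}$, which is a standard finite-difference identity (the $i$-fold forward difference at zero of the polynomial $k\mapsto\binom{s+k}{s}$) and can also be read off the generating function $(1-x)^{-s-1}$. Substituting $\cos^{2i}\theta=\mscp{x}{y}^{2i}/(\norm{x}^{2i}\norm{y}^{2i})$ then yields Eq.~\ref{eq:kernel2k}. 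I expect this combinatorial reconciliation to be the main obstacle: the rotational reduction and the Gaussian moment computation are mechanical, whereas absorbing the $\sin^{2(s-k)}\theta$ factors into pure $\cos^{2i}\theta$ terms requires a Vandermonde--Chu style argument. A cleaner alternative is to observe that $u_1$ and $v=u_1\cos\theta+u_2\sin\theta$ form a jointly complex Gaussian pair with $\mathbb{E}[u_1 v^{*}]=\cos\theta$, so the target moment follows immediately from the classical identity $\mathbb{E}[|X|^{2s}|Y|^{2s}]=(s!)^{2}\sum_{i=0}^{s}\binom{s}{i}^{2}|\rho|^{2i}$ for a unit-variance correlated complex Gaussian pair, bypassing the $\sin\theta$ bookkeeping entirely.
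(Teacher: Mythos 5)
Your main argument is correct and is essentially the paper's own proof (Appendix~\ref{annex:proofevenexp}): rotational invariance reduces the kernel to a two-variable complex Gaussian moment, the binomial expansion of $|u_1\cos\theta+u_2\sin\theta|^{2s}$ survives only on the diagonal by phase invariance, the Gaussian moments give the intermediate sum $\sum_k\binom{s}{k}^2(s+k)!(s-k)!\cos^{2k}\theta\sin^{2(s-k)}\theta$, and the passage to pure powers of $\cos^2\theta$ hinges on the identity $\sum_{k=0}^{i}(-1)^{i-k}\binom{i}{k}\binom{s+k}{s}=\binom{s}{i}$, which is exactly the content of the paper's $T_a$ computation (the paper proves it by upper negation plus Vandermonde, while your finite-difference/generating-function justification of the same identity is equally valid). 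Your closing alternative --- observing that $u_1$ and $v=u_1\cos\theta+u_2\sin\theta$ are jointly circular complex Gaussians with correlation $\cos\theta$ and invoking $\mathbb{E}[|X|^{2s}|Y|^{2s}]=(s!)^2\sum_i\binom{s}{i}^2|\rho|^{2i}$ --- is a genuinely shorter route that the paper does not take; it eliminates the $\sin\theta$ bookkeeping entirely, at the cost of outsourcing the combinatorics to a classical bivariate moment formula that would itself need a citation or a short Wick-calculus derivation.
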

\noindent Eq.~\ref{eq:kernel2k} is connected to the polynomial kernel~\cite{scholkopf2002learning} defined as:
\begin{equation}
\label{eqn:bin-theorem}
(\nu + \mat{x}^{\top} \mat{y})^p
= \sum_{i=0}^{p} \binom{p}{i} \nu^{p-i} \mscp{x}{y}^i
\end{equation}
with $\nu \geq 0$ and $p \in \mathbb{N}$ the order of the kernel. For $\nu=0$ the kernel is called homogeneous. For $\nu > 0$ the polynomial kernel consists of a sum of lower order homogeneous polynomial kernels up to order $p$. It can be seen as having richer feature vectors including all lower-order kernel features.
For optical RFs raised to the power of $s \in \mathbb{N}$ we have a sum of homogeneous polynomial kernels taken to even orders up to $m=2s$.

Since $\mat{x}^{\top}\mat{y} =\norm{x}\norm{y} \cos{\theta}$,  the kernel scales with $\norm{x}^i\norm{y}^i$, which is characteristic to any homogeneous polynomial kernel. It is easy to extend this relation to the inhomogeneous polynomial kernel by appending a bias to the input vectors, i.e. $\mat{x}'^{\top}\mat{y}' = \nu + \mat{x}^{\top} \mat{y}$ when $\mat{x}'=(\sqrt{\nu}, \mat{x}_1, ..., \mat{x}_d)^{\top}$ and $\mat{y}'=(\sqrt{\nu}, \mat{y}_1, ..., \mat{y}_d)^{\top}$.
A practical drawback of this approach is that increasing the power of the optical RFs also increases their variance. Thus, convergence requires higher projection dimensions. Although high dimensional projections can be computed easily using the OPU, solving models on top of them poses other challenges that require special treatment~\cite{rudi2017falkon} (e.g. Ridge Regression scales cubically with $D$). Therefore, we did not include these cases in the experiments in the next section and leave them for future research.


\section{Experiments}

\begin{figure}[h]
    \centering
    \subfloat[$D \leq 10\,000$ (without optical RFs for $m=4$)]{{\includegraphics[width=0.5\textwidth, clip=true,trim=0 0 0 0]{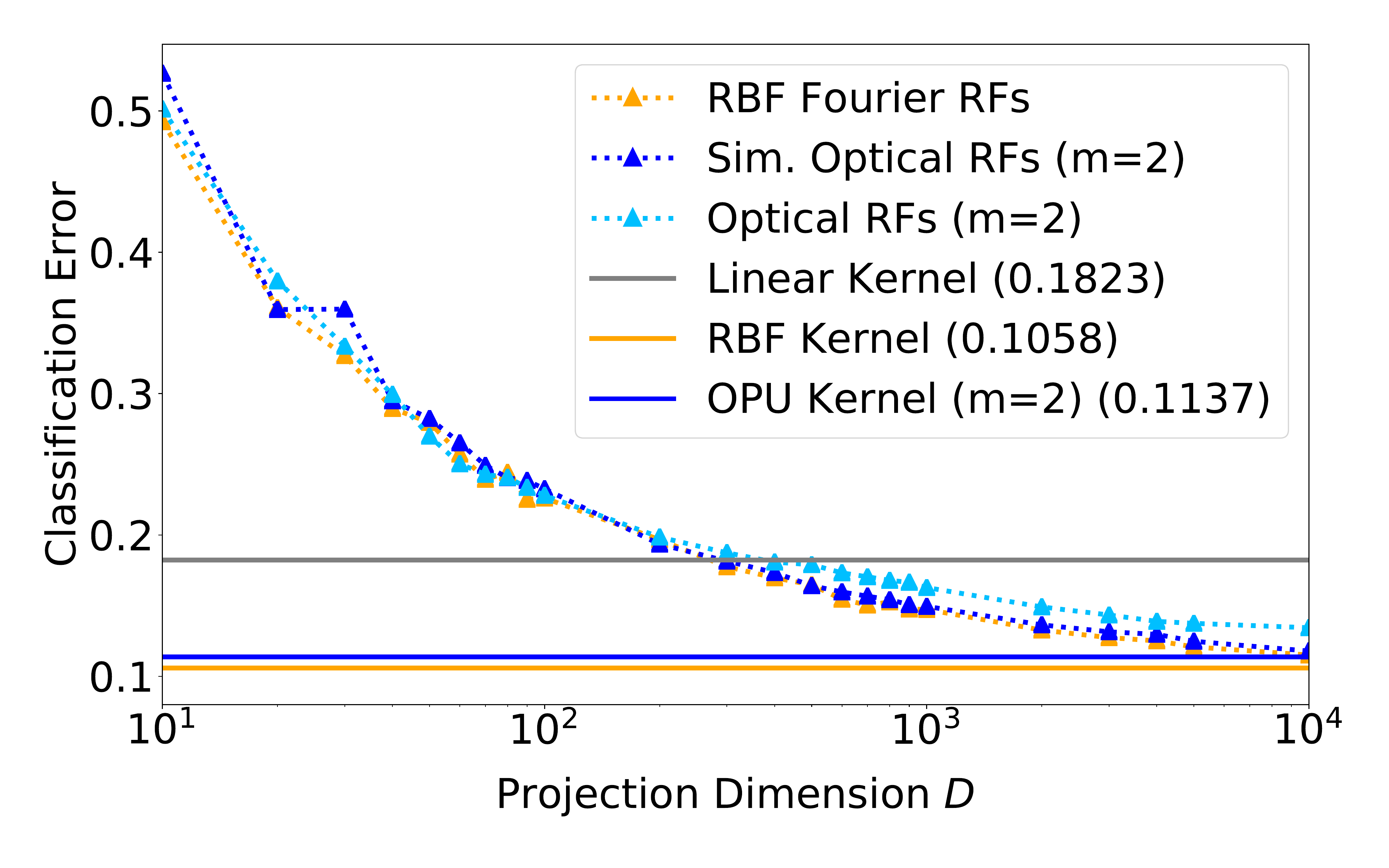} }}%
    \subfloat[$D \geq 10\,000$ (with optical RFs for $m=4$)]{{\includegraphics[width=0.5\textwidth, clip=true,trim=0 0 0 0]{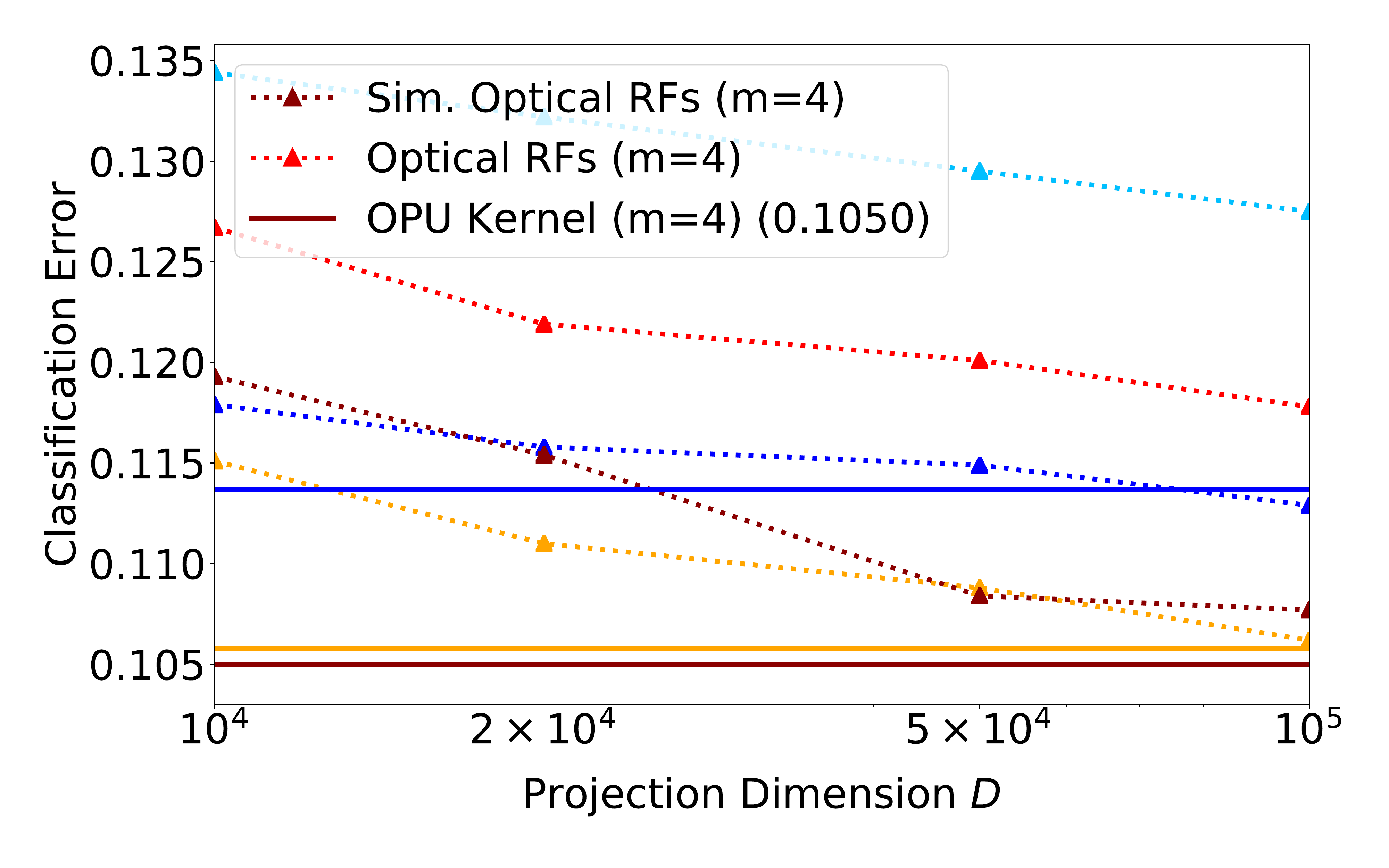} }}%
    \caption{Ridge Regression test error on Fashion MNIST for different RFs and projection dimensions $D$. Horizontal lines show the test error using the true kernel. Standard deviations for different seeds are negligibly small and not shown in the plot. Plot (a) compares optical RFs of degree $m=2$ to RBF Fourier RFs. Higher degree optical RFs are left out for better readability. The more slowly converging optical RFs for $m=4$ are added for larger $D$ in plot (b).}%
    \label{fig:fashion-mnist}
\end{figure}

In this section, we assess the usefulness of optical RFs for different settings and datasets. The model of our choice in each case is Ridge Regression. 
OPU experiments were performed remotely on the OPU prototype "Vulcain", running in the LightOn Cloud with library \texttt{LightOnOPU} v1.0.2. Since the current version only supports binary input data we decide to binarize inputs for all experiments using a threshold binarizer (see Appendix~\ref{appendix:C}). The code of the experiments is publicly available\footnote{ \href{https://github.com/joneswack/opu-kernel-experiments}{https://github.com/joneswack/opu-kernel-experiments}}.

\subsection{Optical random features for Fashion MNIST}

We compare optical RFs (simulated as well as physical) to an RBF Fourier Features baseline for different projection dimensions $D$ on Fashion MNIST. We use individually optimized hyperparameters for all RFs that are found for $D=10\,000$ using an extensive grid search on a held-out validation set. The same hyperparameters are also used for the precise kernel limit.
Fig.~\ref{fig:fashion-mnist} shows how the overall classification error decreases as $D$ increases. Part (b) shows that simulated optical RFs for $m=2$ and RBF Fourier RFs reach the respective kernel test score at $D=10^5$. Simulated optical RFs for $m=4$ converge more slowly but outperform $m=2$ features from $D=2\times10^4$. They perform similarly well as RBF Fourier RFs at $D=10^5$. The performance gap between $m=2$ and $m=4$ also increases for the real optical RFs with increasing $D$. This gap is larger than for the simulated optical RFs due to an increase in regularization for the $m=2$ features that was needed to add numerical stability when solving linear systems for large $D$.

The real OPU loses around 1.5\% accuracy for $m=2$ and 1.0\% for $m=4$ for $D=100\,000$, which is due slightly suboptimal hyperparameters to improve numerical stability for large dimensions. Moreover, there is a small additional loss due to the quantization of the analog signal when the OPU camera records the visual projection.


\subsection{Transfer learning on CIFAR-10}

\begin{table*}[h]
\centering
\setlength\tabcolsep{5.0pt}

\begin{tabularx}{\textwidth}{c|cccc|ccc|ccccc}
    
    Architecture & \multicolumn{4}{c|}{ResNet34} & \multicolumn{3}{c|}{AlexNet} & \multicolumn{5}{c}{VGG16} \\

    
    \hline
    
    Layer & L1 & L2 & L3 & Final &
    MP1 & MP2 & Final &
    MP2 & MP3 & MP4 & MP5 & Final \\
    
    Dimension $d$ & 4\,096 & 2\,048 & 1\,024 & 512 &
    576 & 192 & 9\,216 &
    8\,192 & 4\,096 & 2\,048 & 512 & 25\,088 \\

    \hline
    
    Sim. Opt. RFs & 30.4 & \textbf{24.7} & \textbf{28.9} & \textbf{11.6} & \textbf{38.1} & \textbf{41.9} & 19.6 & 28.2 & \textbf{20.5} & \textbf{20.7} & \textbf{29.8} & 15.2 (\textbf{12.9}) \\
    
    Optical RFs & 31.1 & 25.7 & 29.7 & 12.3 & 39.2 & 42.6 & 20.8 & 30.9 & 23.3 & 21.5 & 30.2 & 16.4  \\
    
    RBF Four. RFs & \textbf{30.1} & 25.2 & 30.0 & 12.3 & 39.4 & \textbf{41.9} & \textbf{19.1} & 28.0 & 20.7 & \textbf{20.7} & 30.1 & 14.8 (13.0) \\
    
    No RFs & 31.3 & 26.7 & 33.5 & 14.7 & 44.6 & 48.8 & 19.6 & \textbf{27.1} & 21.0 & 22.5 & 34.8 & \textbf{13.3} 
\end{tabularx}
\caption{Test errors (in \%) on CIFAR-10 using $D=10^4$ RFs for each kernel (except linear). Features were extracted from intermediate layers when using the original input size (32x32). Final convolutional layers were used with upscaled inputs (224x224). L(i) refers to the ith ResNet34 layer and MP(i) to the ith MaxPool layer of VGG16/AlexNet. Values for the kernel limit are shown in parenthesis (last column).}
\label{tab:cifar10}
\end{table*}
    
    



An interesting use case for the OPU is transfer learning for image classification. For this purpose we extract a diverse set of features from the CIFAR-10 image classification dataset using three different convolutional neural networks (ResNet34~\cite{He2015}, AlexNet~\cite{NIPS2012_4824} and VGG16~\cite{Simonyan14c}). The networks were pretrained on the well-known ImageNet classification benchmark~\cite{ILSVRC15}. For transfer learning, we can either fine-tune these networks and therefore the convolutional features to the data at hand, or we can directly apply a classifier on them assuming that they generalize well enough to the data. The latter case requires much less computational resources while still producing considerable performance gains over the use of the original features. This light-weight approach can be carried out on a CPU in a short amount of time where the classification error can be improved with RFs.

We compare Optical RFs and RBF Fourier RFs to a simple baseline that directly works with the provided convolutional features (no RFs).
Table~\ref{tab:cifar10} shows the test errors achieved on CIFAR-10. Each column corresponds to convolutional features extracted from a specific layer of one of the three networks.

Since the projection dimension $D=10^4$ was left constant throughout the experiments, it can be observed that RFs perform particularly well compared to a linear kernel when $D \gg d$ where $d$ is the input dimension. For the opposite case $D \ll d$ the lower dimensional projection leads to an increasing test error. This effect can be observed in particular in the last column where the test error of the RF approximation is higher than without RFs. The contrary can be achieved with large enough $D$ as indicated by the values for the true kernel in parenthesis.

A big drawback here is that the computation of sufficiently large dimensional random features may be very costly, especially when $d$ is large as well. This is a regime where the OPU outperforms CPU and GPU by a large margin (see Fig.~\ref{fig:power-consumption}) since its computation time is invariant to $d$ and $D$.

In general, the simulated as well as the physical optical RFs yield similar performances as the RBF Fourier RFs on the provided convolutional data.


\subsection{Projection time and energy consumption}

The main advantage of the OPU compared to a traditional CPU/GPU setup is that the OPU takes a constant time for computing RFs of arbitrary dimension $D$ (up to $D = 10^6$ on current hardware) for a single input. Moreover, its power consumption stays below 30 W independently of the workload.
Fig.~\ref{fig:power-consumption} shows the computation time and the energy consumption over time for GPU and OPU for different projection dimensions $D$.
In both cases, the time and energy spending do not include matrix building and loading. For the GPU, only the calls to the PyTorch function \texttt{torch.matmul} are measured and energy consumption is the integration over time of power values given by the \texttt{nvidia-smi} command.

For the OPU, the energy consumption is constant w.r.t. $D$ and equal to 45 Joules (30 W multiplied by 1.5 seconds). The GPU computation time and energy consumption are monotonically increasing except for an irregular energy development between $D=45\,000$ and $D=56\,000$. This exact irregularity was observed throughout all simulations we performed and can most likely be attributed to an optimization routine that the GPU carries out internally. The GPU consumes more than 10 times as much energy as the OPU for $D=58\,000$ (GPU memory limit).
The GPU starts to use more energy than the OPU from $D=18\,000$. The exact crossover points may change in future hardware versions. The relevant point we make here is that the OPU has a better scalability in $D$ with respect to computation time and energy consumption.



\begin{figure}[h]
\centerline{\includegraphics[width=1\linewidth, clip=true,trim=11 12 11 0]{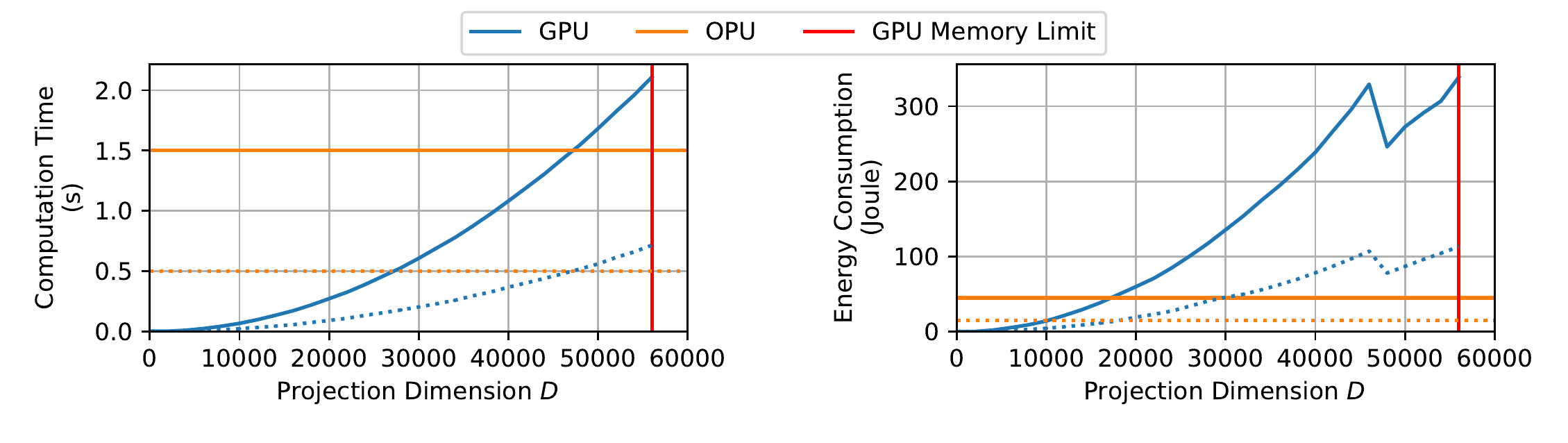}}
\caption{Time and energy spent for computing a matrix multiplication $(n,D)\times(D,D)$. The batchsize $n$ is $3000$ (solid line) or $1000$ (dotted). The curves cross each other at the same $D$ independent from $n$. We verified more precisely that time and energy are linear with $n$ for both OPU and GPU (experiments were run on a NVIDIA P100).}
\label{fig:power-consumption}
\end{figure}


\subsubsection*{Conclusion and perspectives}


The increasing size of available data and the benefit of working in high-dimensional spaces led to an emerging need for dedicated hardware. GPUs have been used with great success to accelerate algebraic computations for kernel methods and deep learning. Yet, they rely on finite memory, consume large amounts of energy and are very expensive.

In contrast, the OPU is a scalable memory-less hardware with reduced power consumption. In this paper, we showed that optical RFs are useful in their natural form and can be modified to yield more flexible kernels.
In the future, algorithms should be developed to deal with large-scale RFs, and other classes of kernels and applications should be obtained using optical RFs.

\subsubsection*{Acknowledgements}

RO acknowledges support by grants from R\'egion Ile-de-France.
MF acknowledges support from the AXA Research Fund and the Agence Nationale de la Recherche (grant ANR-18-CE46-0002). FK acknowledges support from  ANR-17-CE23-0023 and the Chaire CFM-ENS. The authors thank LightOn for access to the OPU and their kind support.


\clearpage
\label{sec:refs}
\bibliographystyle{IEEEbib}
\bibliography{refs}
\pagebreak
\appendix
\section{Extended proofs of Theorem \ref{theo:expo2}}
\subsection{Main proof of Theorem \ref{theo:expo2}}
\label{appendix:A}
\begin{proof}
Thanks to the rotational invariance of the complex gaussian random vectors $\mat{u}^{(k)}$, we can fix $\mat{x} = \|\mat{x}\| \mat{e_1}$ and $\mat{y} = \|\mat{y}\| (\cos\theta \mat{e_1} + \sin\theta \mat{e_2})$. $\theta$ represents the angle between $\mat{x}$ and $\mat{y}$ and $\cos(\theta) = \frac{\mat{x}^\top \mat{y}}{  \|\mat{x}\|\|\mat{y}\|}$. Let $\mat{e_i}^\top\mat{u}=u_i\sim\mathcal{CN}(0,1)$, $i=1,2$.

Thus the kernel function becomes:
\begin{equation}
    k_2(\mat{x}, \mat{y}) = \|\mat{x}\|^2 \|\mat{y}\|^2 \int |u_1|^2 |u_1 \cos\theta + u_2 \sin\theta|^2 \mu(\mat{u}) d\mat{u}
\end{equation}
We then expand the quadratic forms and compute the resulting gaussian integrals:
\begin{align*}
    \frac{1}{\|\mat{x}\|^2 \|\mat{y}\|^2} k_2(\mat{x},\mat{y}) 
    &= \int |u_1|^2 |u_1 \cos\theta + u_2 \sin\theta|^2 \frac{1}{\pi^d} e^{- \frac{\|\mat{u}\|^2}{2}} d\mat{u} \\
    &= \int |u_1|^2 |u_1 \cos\theta + u_2 \sin\theta|^2 \frac{1}{\pi^2} e^{- \frac{(|u_1|^2 + |u_2|^2)}{2}} d\mat{u} \\
    &= \int \left(|u_1|^4 \cos^2\theta + |u_1|^2 |u_2|^2 \sin^2\theta + 2 |u_1|^2 \text{Re}(\overline{u}_1 u_2) \cos\theta \sin\theta\right) \\
    &\qquad \frac{1}{\pi^2} e^{- \frac{|u_1|^2 + |u_2|^2}{2}} d\mat{u}
\end{align*}

The third term in the parenthesis is odd in $u_2$, so the integral of this term vanishes. Let's remark that if $u\sim\mathcal{CN}(0,\sigma^2)$ then $\mathcal{I}m(u),\mathcal{R}e(u) \sim \mathcal{N}(0,\sigma^{*2}=\frac{1}{2}\sigma^2)$. The two other terms can be computed easily using the moments of the gaussian distributions (the moment of order 2 of a complex gaussian random variable is $2\Gamma(2)\sigma^{*2}=2\sigma^{*2}$, the moment of order 4 is $2^2 \Gamma(3)\sigma^{*4}=8\sigma^{*4}$ where $\sigma^{*2}$
is the variance of the real and the imaginary part of $u_{1,2}$) .
\begin{align*}
    k_2(\mat{x}, \mat{y}) &= 8\sigma^{*4}\cos^2\theta+ 4\sigma^{*4}\sin^2\theta\\
    &= 4\sigma^{*4} \|\mat{x}\|^2 \|\mat{y}\|^2 (2 \cos^2\theta + \sin^2\theta) \\
    &= 4\sigma^{*4} \|\mat{x}\|^2 \|\mat{y}\|^2 (1 + \cos^2\theta) \\
    &= 4\sigma^{*4} \|\mat{x}\|^2 \|\mat{y}\|^2 + 4\sigma^{*4} \mscp{x}{y}^2\\
    &= \|\mat{x}\|^2 \|\mat{y}\|^2 (1 + \cos^2\theta) \quad\textrm{if}\quad\sigma^{*2}=\frac{1}{2}
\end{align*}
\end{proof}

\subsection{Alternative proof of Theorem~\ref{theo:expo2}}

The following is an alternative derivation of Theorem~\ref{theo:expo2} that breaks the complex random projection into its real and imaginary parts.

\begin{proof}
We can rewrite Equation~\ref{eq:OPU-featuremap} as:

\begin{equation}
\phi(\mat{x}) = \frac{1}{\sqrt{D}} | \mat{Ux} |^2
= \frac{1}{\sqrt{D}} \Big( (\mat{A x})^2 + (\mat{B x})^2 \Big),
\label{eq:OPUphi}
\end{equation}
where $\mat{A}, \mat{B} \in \mathbb{R}^{D \times d}$ are the real and imaginary parts of the complex matrix $\mat{U} \in \mathbb{C}^{D \times d}$. The elements of $\mat{A}$ and $\mat{B}$ are i.i.d. draws from a zero-centered Gaussian distribution with variance $\sigma^{*2}$.

Now we rewrite the kernel as:

\begin{align}
\label{eq:real_kernel}
k_2(\mat{x},\mat{y})
=~& \mathbb{E} \Big[{| \mat{U x} |^2 | \mat{U y}|^2 \Big] }
= \mathbb{E} \Big[
\Big( \mscp{a}{x}^2 + \mscp{b}{x}^2 \Big)
\Big( \mscp{a}{y}^2 + \mscp{b}{y}^2 \Big)
\Big]\nonumber\\
=~& \mathbb{E} \Big[ \mscp{a}{x}^2 \mscp{a}{y}^2 \Big] +
\mathbb{E} \Big[\mscp{b}{x}^2 \mscp{b}{y}^2 \Big] +
\mathbb{E} \Big[\mscp{a}{x}^2 \mscp{b}{y}^2\Big] +
\mathbb{E} \Big[\mscp{b}{x}^2 \mscp{a}{y}^2\Big]\nonumber\\
=~& 2 \Big( \underbrace{\mathbb{E} \Big[\mscp{a}{x}^2 \mscp{a}{y}^2\Big]}_{\substack{(1)}} + \underbrace{\mathbb{E} \Big[\mscp{a}{x}^2 \mscp{b}{y}^2\Big]}_{\substack{(2)}} \Big)
\end{align}

Term (1) in Equation~\ref{eq:real_kernel} can be seen as the expectation of the product of two quadratic forms $\mathbb{E} \big[ Q_1(\mat{a}) Q_2(\mat{a}) \big]$ where $Q_1(\mat{a}) = \mat{a^{\top} x x^{\top} a}$ and $Q_2(\mat{a}) = \mat{a^{\top} y y^{\top} a}$. Expectations of products of quadratic forms in normal random variables are well-studied by (Magnus, 1978)\footnote{The moments of products of quadratic forms in normal variables, Jan R. Magnus, statistica neerlandica, Vol. 32, 1978} and others.

Using their result, we can immediately solve Term (1):

\begin{equation}
\underbrace{\mathbb{E} \Big[\mscp{a}{x}^2 \mscp{a}{y}^2\Big]}_{\substack{(1)}}
= \sigma^{*4} \Big( \tr{ \mat{x x^{\top}} } \tr{ \mat{y y^\top} } + 2\tr{ \mat{x x^{\top} y y^{\top}} } \Big) = \sigma^{*4} \Big( \norm{x}^2 \norm{y}^2 + 2 \mscp{x}{y}^2 \Big)
\end{equation}

Term (2) is easy to solve:

\begin{equation}
    \underbrace{\mathbb{E} \Big[\mscp{a}{x}^2 \mscp{b}{y}^2\Big]}_{\substack{(2)}}
    = \mathbb{E} \Big[\mscp{a}{x}^2\Big] \mathbb{E} \Big[\mscp{b}{y}^2\Big]
    = \Big( \mat{x}^{\top} \mathbb{E} \Big[ \mat{a a^{\top}} \Big] \mat{x} \Big)
    \Big( \mat{y}^{\top} \mathbb{E} \Big[ \mat{b b^{\top}} \Big] \mat{y} \Big)
    = \sigma^{*4} \norm{x}^2 \norm{y}^2
\end{equation}

Inserting both terms into Equation~\ref{eq:real_kernel} yields the desired kernel equation.

\end{proof}

Although higher degree kernels can be derived in this manner as well, we proceed with the previous method for the following derivations.

\section{Proof of Theorem \ref{th:kernel2k}}
\subsection{Even exponents}
\label{annex:proofevenexp}
If we use the same reasoning as in the proof of Appendix \ref{appendix:A}, when the optical random feature is given as in Equation \ref{eq:FMarbexpo}, i.e. $\phi (\mat{x}) = \frac{1}{\sqrt{D}}|\mat{Ux}|^m$, we obtain:
\begin{align*}
    k_{2s}(\mat{x},\mat{y}) &= \int |\mat{x}^\top \mat{u}|^m |\mat{y}^\top \mat{u}|^m \mu(\mat{u})d\mat{u}\\
    &= \int \|\mat{x}\|^m\|\mat{y}\|^m|u_1|^m|u_1\cos\theta+u_2\sin\theta|^m\mu(\mat{u})d\mat{u}\\
    &= \|\mat{x}\|^m\|\mat{y}\|^m \int |u_1|^m|u_1\cos\theta+u_2\sin\theta|^m\mu(\mat{u})d\mat{u}\\
\end{align*}
Now we focus on the term $A = |u_1\cos\theta+u_2\sin\theta|^{2s}$ (with $m =2s$) as we focus on even powers: 

\begin{align*}
A &= (\overline{u}_1\cos\theta+\overline{u}_2\sin\theta)^s(u_1\cos\theta+u_2\sin\theta)^s\\
&= \bigg(\sum_{j=0}^s {s\choose j}(\overline{u}_1\cos\theta)^j(\overline{u}_2\sin\theta)^{s-j}\bigg)\bigg(\sum_{i=0}^s {s\choose i}(u_1\cos\theta)^i(u_2\sin\theta)^{s-i}\bigg)\\
&= \sum_i\sum_j{s\choose j}{s\choose i }\overline{u}_1^j u_1^i\overline{u}_2^{s-j}u_2^{s-i}(\cos\theta)^{i+j}(\sin\theta)^{2s-i-j}
\end{align*}
One can notice that $\mathbb{E}[A] = \sum_i\sum_j{s\choose j}{s\choose i }\mathbb{E}[\overline{u}_1^j u_1^i \overline{u}_2^{s-j} u_2^{s-i}]$ has its cross terms equal to 0: when $i\neq j$, the expectation inside the sum is equal to zero because $u_1$ and $u_2$ are i.i.d (their correlation is then equal to 0) or we can see this by rotational invariance (any complex random variable with a phase uniform between 0 and $2\pi$ has a mean equal to zero). Therefore we obtain:
\begin{equation*}
    \mathbb{E}[A] = \sum_{i=0}^s {s \choose i}^2 |u_1|^{2i}|\cos\theta|^{2i} |u_2|^{2(s-i)}|\sin\theta|^{2(s-i)}
\end{equation*}
Using the computation of $\mathbb{E}[A]$, we can therefore deduce the analytical formula for the kernel:
\begin{align*}
    \frac{k_{2s}(\mat{x},\mat{y})}{\|\mat{x}\|^m\|\mat{y}\|^m} &= \int|u_1|^{2s}\sum_{i=0}^s {s \choose i}^2 |u_1|^{2i}|\cos\theta|^{2i} |u_2|^{2(s-i)}|\sin\theta|^{2(s-i)}\mu(\mat{u})d\mat{u}\\
    &= \sum_{i=0}^s {s \choose i}^2 \mathbb{E}\bigg[|u_1|^{2(s+i)}|\cos\theta|^{2i}\bigg]\,\, \mathbb{E}\bigg[|u_2|^{2(s-i)}|\sin\theta|^{2(s-i)}\bigg]\\
    &= \sum_{i=0}^s {s \choose i}^2 |\cos\theta|^{2i}\,2^{s+i}\sigma^{*2(s+i)}\,\Gamma(s+i+1)\, |\sin\theta|^{2(s-i)}\,2^{s-i}\sigma^{*2(s-i)}\,\Gamma(s-i+1)\\
    &= \sum_{i=0}^s {s \choose i}^2 \,2^{2s}\sigma^{*4s}\,|\cos\theta|^{2i}\,|\sin\theta|^{2(s-i)} \Gamma(s+i+1)\,\Gamma(s-i+1)
\end{align*}

We can simplify this formula, by noticing that $\sin^2\theta=1-\cos^2\theta$. The latter formula becomes: 
\begin{align*}
    \frac{k_{2s}(\mat{x},\mat{y})}{\|\mat{x}\|^m\|\mat{y}\|^m} &= \sum_{i=0}^s {s \choose i}^2 |\cos\theta|^{2i}\,(1-\cos^2\theta)^{(s-i)}\,2^{2s}\sigma^{*4s}\,\Gamma(s+i+1)\,\Gamma(s-i+1)
\end{align*}
The new term can be expanded using a binomial expansion: $(1-\cos^2\theta)^{(s-i)}=\sum_{t=0}^{s-i}(-\cos^2\theta)^t$, leading to 
\begin{align*}
    \frac{k_{2s}(\mat{x},\mat{y})}{\|\mat{x}\|^m\|\mat{y}\|^m} &= \sum_{i=0}^s\sum_{t=0}^{s-i}{s \choose i}^2(s+i)!(s-i)!{s-i \choose t}(-1)^t\cos^{2(i+t)}\theta
\end{align*}
Using the change of variable $a=i+t$ and keeping $i=i$, we obtain:
\begin{align*}
    \frac{k_{2s}(\mat{x},\mat{y})}{\|\mat{x}\|^m\|\mat{y}\|^m} &= \sum_{a=0}^s\bigg[\sum_{i=0}^{a}{s \choose i}^2(s+i)!(s-i)!{s-i \choose a-i}(-1)^{a-i}\bigg]\cos^{2a}\theta
\end{align*}
We focus on the term between the brackets that we will call $T_a$:
\begin{align*}
    T_a &= \sum_{i=0}^{a}\frac{s!^2}{(s-i)!^2i!^2}(s+i)!(s-i)!\frac{(s-i)!}{(a-i)!(s-a)!}(-1)^{a-i}\\
    &=\sum_{i=0}^{a} (s!)^2 \frac{(s+i)!}{(i!)^2(a-i)!(s-a)!}(-1)^{a-i}\\
    &=(s!)^2{s \choose a}(-1)^a\sum_{i=0}^a {a \choose i}{s+i \choose i}(-1)^{i}
\end{align*}
Using the upper negation (${a\choose b}=(-1)^b{b-a-1\choose b}$) so here ${s+i\choose i} = {i - i -s-1 \choose i}(-1)^i= {-s-1\choose i}(-1)^i$ leads to
\begin{align}
    T_a &= (s!)^2{s \choose a}(-1)^a\sum_{i=0}^a {a \choose i}{-s-1 \choose i}\\
    &= (s!)^2{s \choose a}(-1)^a\sum_{i=0}^a {a \choose i}{-s-1 \choose -s-1-i}
\end{align}  
Now we can use the Vandermonde identity $\sum_{i=0}^n{a\choose i}{b \choose n-i} = {a+b \choose n}$, yielding:
\begin{align}
    T_a &= (s!)^2{s \choose a}(-1)^a {a-s-1\choose -s-1}=(s!)^2{s \choose a}(-1)^a {a-s-1\choose a}\\
        &=(s!)^2 {s\choose a} {s\choose a} =(s!)^2 {s\choose a}^2
\end{align}
where in the last line we used again the upper negation formula.\\ 
This leads to the desired result for $m=2s$
\begin{equation*}
    k_{2s}(\mat{x},\mat{y}) = \|\mat{x}\|^m\|\mat{y}\|^m \sum_{i=0}^s (s!)^2 {s \choose i}^2\cos^{2i}\theta
\end{equation*}

\section{Convergence properties}
For simplicity, we will consider random variables sampled from the normal distribution and not the complex normal one. We will be using the following lemma for proving the convergence in probability of the estimator of the kernel generated using optical random features toward the real kernel:
\begin{lemma}
\label{lemma:subexpo}
(Bernstein-type inequality)\footnote{Theorem 2.8.2 of  \textit{High-dimensional probability: An introduction with applications in data science}, Roman Vershynin. Vol. 47. Cambridge University Press, 2018.} Let $X_1,...,X_D$ be independent centered sub-exponential random variables, and let  $a = (a_1,...,a_D)\in\mathbb{R}^D$ . Then, for every $t\geq0$, we have: . 
\begin{equation}
    \mathbb{P}\bigg\{\bigg|\sum_{i=1}^Da_iX_i\bigg|\geq t\bigg\}\leq 2 \exp\bigg[-c\min\bigg(\frac{t^2}{K^2\|a\|_2^2},\frac{t}{K\|a\|_\infty}\bigg)\bigg]
\end{equation}
with $c$ an absolute constant and $K =\max_i\|X_i\|_{\psi_1}$ ($\|X_i\|_{\psi_1} = \sup_{p\geq1}p^{-1}(\mathbb{E}|X|^p)^{1/p}$ being the sub-exponential norm of $X_i$).\\

\end{lemma}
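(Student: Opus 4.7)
The plan is to use the standard Chernoff method, which reduces tail bounds for sums of independent random variables to controlling their moment generating function (MGF). The key auxiliary input is the sub-exponential MGF bound: for a centered sub-exponential random variable $X$ with $\|X\|_{\psi_1}\le K$, there exist absolute constants $c_0, C_0 > 0$ such that $\mathbb{E}[e^{\lambda X}] \le \exp(C_0\lambda^2 K^2)$ whenever $|\lambda| \le c_0/K$. This is a classical consequence of the equivalent definitions of the $\psi_1$-norm and is itself the main ingredient one would need to justify.

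First I would apply Markov's inequality to $e^{\lambda S}$ where $S = \sum_i a_i X_i$, giving $\mathbb{P}(S \ge t) \le e^{-\lambda t}\,\mathbb{E}[e^{\lambda S}]$. By independence, $\mathbb{E}[e^{\lambda S}] = \prod_i \mathbb{E}[e^{\lambda a_i X_i}]$. I would then apply the sub-exponential MGF bound to each factor, which is valid provided $|\lambda a_i| \le c_0/K$ for every $i$, i.e.\ $|\lambda| \le c_0/(K\|a\|_\infty)$. This yields
\[
\mathbb{E}[e^{\lambda S}] \le \exp\bigl(C_0 K^2 \lambda^2 \|a\|_2^2\bigr),
\]
so that $\mathbb{P}(S\ge t) \le \exp\bigl(-\lambda t + C_0 K^2 \lambda^2 \|a\|_2^2\bigr)$.

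Then I would optimize over $\lambda \in [0, c_0/(K\|a\|_\infty)]$. The unconstrained minimizer is $\lambda^\star = t/(2 C_0 K^2\|a\|_2^2)$, producing exponent $-t^2/(4 C_0 K^2\|a\|_2^2)$. This is admissible exactly when $\lambda^\star \le c_0/(K\|a\|_\infty)$, corresponding to the \emph{sub-Gaussian regime} where the quadratic-in-$t$ term dominates. In the complementary \emph{sub-exponential regime} the optimum lies on the boundary, so one takes $\lambda = c_0/(K\|a\|_\infty)$, yielding an exponent of order $-c_0 t/(2 K\|a\|_\infty)$. Taking the minimum of the two exponents and absorbing numerical constants into a single $c$ gives the bound with the $\min$ of $t^2/(K^2\|a\|_2^2)$ and $t/(K\|a\|_\infty)$. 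The absolute value on the left-hand side is recovered by applying the same argument to $-S$ (noting $\|-X_i\|_{\psi_1} = \|X_i\|_{\psi_1}$) and union-bounding.

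The main obstacle is the sub-exponential MGF estimate used in the first step; the two-regime optimization is routine once it is in hand. To derive it from the $\psi_1$-norm definition, one would expand $\mathbb{E}[e^{\lambda X}] = 1 + \sum_{p\ge 2} \lambda^p \mathbb{E}[X^p]/p!$ (using centering to drop the linear term), bound $\mathbb{E}|X|^p \le (K p)^p$ via the sub-exponential moment characterization, and use $p^p/p! \le e^p$ (Stirling) to reduce the series to a geometric one convergent for $|\lambda| K e < 1$. Controlling the remainder by $O(\lambda^2 K^2)$ in this local regime and invoking $1 + u \le e^u$ produces the required bound $\exp(C_0\lambda^2 K^2)$.
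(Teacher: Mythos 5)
Your proposal is correct, but note that the paper itself gives no proof of this lemma: it is imported verbatim from Vershynin's textbook (Theorem 2.8.2), so there is no internal argument to compare against. Your Chernoff-method derivation --- the sub-exponential MGF bound $\mathbb{E}[e^{\lambda X}]\le \exp(C_0\lambda^2K^2)$ valid for $|\lambda|\le c_0/K$, followed by the two-regime optimization over $\lambda\in[0,\,c_0/(K\|a\|_\infty)]$ and a union bound for the absolute value --- is precisely the standard proof given in that reference, and your sketch of the MGF estimate from the moment characterization $\mathbb{E}|X|^p\le (Kp)^p$ is sound.
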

Let's start with the case when the exponent is $m=1$.\\ 
We know that $\mathbb{E}[\phi(\mat{x})\phi(\mat{y})]=k_1(\mat{x},\mat{y})$ so we can obtain centered random variables by doing:
\begin{align}
    \phi(\mat{x})^\top\phi(\mat{y})-k_1(\mat{x},\mat{y}) &=\phi(\mat{x})^\top\phi(\mat{y})-\mathbb{E}[\phi(\mat{x})^\top\phi(\mat{y})]\\
    &= \frac{1}{D}\sum_{i=1}^D\bigg[|\mat{x}^\top \mat{u}^{(i)}| \, |\mat{y}^\top \mat{u}^{(i)}|-k_1(\mat{x},\mat{y})\bigg]\\
    &:=\sum_{i=1}^Da_iX_i
\end{align}
where we have $a_i=\frac{1}{D}$ and $X_i:=|\mat{x}^\top \mat{V}^{(i)}\mat{y}|-k_1(\mat{x},\mat{y})$ (with $\mat{V}^{(i)}=\mat{u}^{(i)}\mat{u}^{(i)\top}$).\\
The matrix $\mat{U}^{(i)}$ has elements behaving sub-exponentially as it is the inner product of two sub-gaussian vectors. Moreover, we have $X_i=|\sum_{j,k}x_jV_{jk}^{(i)}y_k|-k_1(\mat{x},\mat{y})$, so by using lemma \ref{lemma:subexpo}, we can deduce that the sum of independent sub-exponential random variables (here the $V_{jk}^{(i)}$) is still sub-exponential. So we can conclude that $X_i$ is a centered sub-exponential random variable.\\
By applying the Bernstein-type inequality of lemma \ref{lemma:subexpo}, we obtain for the feature map of exponent 1:
\begin{align}
    \mathbb{P}\bigg\{\bigg|\phi(\mat{x})^\top\phi(\mat{y})-k_1(\mat{x},\mat{y})\bigg|\geq t\bigg\} &\leq 2 \exp\bigg[-c\min\bigg(\frac{t^2D}{K^2},\frac{tD}{K}\bigg)\bigg]
\end{align}
So the convergence of the estimator toward the kernel $k_1$ is sub-exponential.
\vspace{0.2cm}

Now for the more general case when $m>1$, we have $\phi(\mat{x})=\frac{1}{\sqrt{D}}|\mat{U}\mat{x}|^m$ so we can introduce the quantity $W_i(\mat{x},\mat{y}) = |\mat{x}^\top V^{(i)} \mat{y}|$ (which has a sub-exponential behavior) such that:
\begin{align}
    \mathbb{P}\bigg\{\frac{1}{D}\sum_{i=1}^DW_i(\mat{x},\mat{y})^m-k_m(\mat{x},\mat{y})\geq t\bigg\} &=\mathbb{P}\bigg\{\frac{1}{D}\sum_{i=1}^DW_i(\mat{x},\mat{y})^m\geq t +k_m(\mat{x},\mat{y})\bigg\}\\
    &=\mathbb{P}\bigg\{\sum_{i=1}^DW_i(\mat{x},\mat{y})^m\geq Dt + Dk_m(\mat{x},\mat{y})\bigg\}\\
    &\leq\mathbb{P}\bigg\{\sum_{i=1}^DW_i(\mat{x},\mat{y})^m\geq Dt + k_m(\mat{x},\mat{y})\bigg\} \quad(\textrm{as}\,\, D\geq1)\\
    &\leq D\,\mathbb{P}\bigg\{W_1(\mat{x},\mat{y})\geq (Dt + k_m(\mat{x},\mat{y}))^{1/m} \bigg\} \quad\textrm{(Union bound)}\\
    &\leq D\,\mathbb{P}\bigg\{W_1(\mat{x},\mat{y})\geq (Dt)^{1/m} + k_m(\mat{x},\mat{y})^{1/m} \bigg\} \quad\textrm{(as $m>1$)}\label{eq27}
\end{align}
Now we can use the following Lyapunov inequality. For $0<s<t$ we have:
\begin{equation*}
    (\mathbb{E}|X|^s)^{1/s}\leq (\mathbb{E}|X|^t)^{1/t}
\end{equation*}
However we know that $k_m(\mat{x},\mat{y})^{1/m} = \mathbb{E}[|\mat{x}^\top\mat{V}^{(i)}\mat{y}|^m]^{1/m}$ and $ k_1 = \mathbb{E}[|\mat{x}^\top\mat{V}^{(i)}\mat{y}|]$, $\forall i \in \{1,\ldots ,D\}$, so using Lyapunov inequality we can deduce that for $m>1$, $k_m(\mat{x},\mat{y})^{1/m}>k_1(\mat{x},\mat{y})$. \\
Using that result, we can bound Eq.~\ref{eq27} as follows:
\begin{align}
    \mathbb{P}\bigg\{\frac{1}{D}\sum_{i=1}^DW_i(\mat{x},\mat{y})^m-k_m(\mat{x},\mat{y})\geq t\bigg\}
    &\leq D\,\mathbb{P}\bigg\{W_1(\mat{x},\mat{y})\geq (Dt)^{1/m} + k_m(\mat{x},\mat{y})^{1/m} \bigg\}\\
    &\leq D\,\mathbb{P}\bigg\{W_1(\mat{x},\mat{y})\geq (Dt)^{1/m} + k_1(\mat{x,\mat{y)}} \bigg\}\\ &\leq D\,\mathbb{P}\bigg\{W_1(\mat{x},\mat{y})-k_1(\mat{x},\mat{y})\geq (Dt)^{1/m} \bigg\}
\end{align}
Using the fact that $W_1$ is sub-exponential and its expectation is $k_1$, then $W_1-k_1(\mat{x},\mat{y})$ is a centered random variable. It follows as a conclusion:
\begin{align}
    \mathbb{P}\bigg\{\frac{1}{D}\sum_{i=1}^D|\mat{x}^\top \mat{V}^{(i)} \mat{y}|^m-k_m(\mat{x},\mat{y})\geq t\bigg\} \leq D \, \exp(-C(Dt)^{1/m})
\end{align}
with C an positive absolute constant. It is not a tight bound but it gives us a behavior in the tails scaling as $\sim  \exp(-(Dt)^{1/m})$.\\
We can conclude for these convergence rates that the higher the exponent $m$ of the feature map is, the slower is the convergence of the estimator toward the real kernel. It has been noticed experimentally in  Fig.~\ref{fig:fashion-mnist} where the convergence of the estimator toward the real kernel for exponent $m=4$ is slower than the one for $m=2$. 

\section{Extended experimental description}
\label{appendix:C}

\subsection{Data encoding}
The current version of the OPU only supports binary inputs. There are different ways to obtain a binary encoding of the data. We apply a simple threshold binarizer to each feature $j$ of a datapoint $\mat{x}_i$:

\begin{equation}
    T(x_{ij}) = \begin{cases}
               1 \quad \text{if} \quad x_{ij} > \theta,\\
               0 \quad \text{otherwise.}
            \end{cases}
\end{equation}

The optimal threshold $\theta$ is determined for every dataset individually such that it maximizes the accuracy on a held-out validation set. Despite the drastic reduction from 32-bit floating point precision to 2 bits, the generalization error drops only by a small amount. The drop is bigger for the convolutional features than for the Fashion MNIST data.

\subsection{Hyperparameter search}
We carry out an extensive hyperparameter search for every dataset (including each convolutional feature set). The hyperparameters are optimized for every kernel individually using its random feature approximation at $D=10\,000$. A thorough hyperparameter search for every feature dimension as well as for the full kernels would be too expensive. Therefore, the hyperparameters are kept the same for different degrees of kernel approximation.

The optimal set of hyperparameters was found with a grid-search on a held-out validation set. For every kernel, we optimized the feature scale as well as the regularization strength alpha. For the linear and the OPU kernel a bias term that can be appended to the original features was added. For the RBF kernel the third hyperparameter dimension is the gamma/lengthscale parameter.

The scale and alpha parameters are optimized on a $\log_{10}$-scale that depends on each kernel. The gamma parameter is optimized on a $\log_2$-scale for Fashion MNIST; for each convolutional feature set it is found by trying out all values around the maximum non-zero decimal position of the heuristic $\gamma = 1 / d$ where $d$ is the original feature dimension. The bias parameter is determined on a $\log$-scale depending on the degree of the kernel.

Fig.~\ref{fig:grid-search} shows an example hyperparameter grid for simulated optical RFs of degree 2 applied to Fashion MNIST. Brighter colors correspond to higher validation scores.

\begin{figure}[h]
\centerline{\includegraphics[width=1\linewidth, clip=true,trim=0 0 0 0]{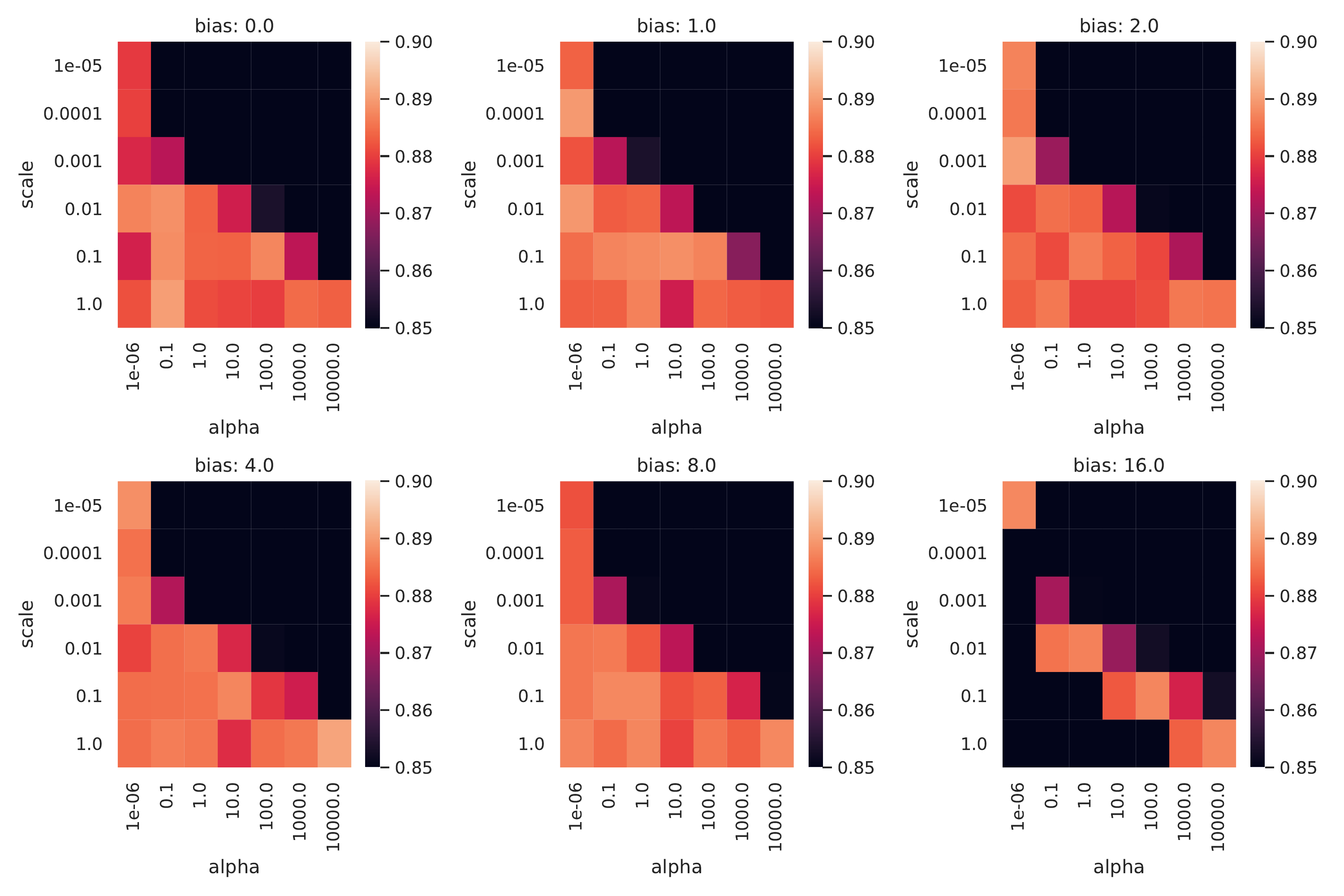}}
\caption{Example hyperparameter grid for simulated optical RFs of degree 2 applied to Fashion MNIST. It is easy to see that the upper triangular part of each grid matrix is a result of over-regularization ($\alpha$ is too large compared to the respective feature scale). Increasing the bias too much also leads to a degrading performance. The optimal hyperparameters are found in the lower-triangular part of one of the hyperparameter grids.}
\label{fig:grid-search}
\end{figure}

\subsection{Solvers for linear systems in Ridge Regression}

Ridge Regression can be solved either in its primal or dual form. In either case the solution is found by solving a linear system of equations of the form $\mat{Ax} = \mat{b}$. In the primal form we have $\mat{A}= \big( \phi(\mat{X})^{\top} \phi(\mat{X}) + \alpha \mat{I} \big)$ and $\mat{b} = \phi(\mat{X})^{\top} \mat{Y}$, whereas in the dual form $\mat{A} = \big(\mat{K} + \alpha \mat{I} \big)$ and $\mat{b} = \mat{Y}$. $\phi(\mat{X})$ is the random projection of the training data. $\mat{Y}$ are the one-hot encoded regression labels ($+1$ is used for the positive and $-1$ for the negative class). $\mat{K}$ is the n-by-n kernel matrix for the training data.

Solving the linear system has computational complexity $\mathcal{O}(D^3)$ or $\mathcal{O}(n^3)$ for the primal and dual form respectively. $D$ is the projection dimension and $n$ the number of datapoints.

Cholesky solvers turned out to work well for primal linear systems up to $D=10\,000$. For higher dimensions as well as for the dual form (computation of exact kernels) we used the conjugate gradients method.

For the computation of large matrix products as well as conjugate gradients, we developed a memory-efficient method that makes use of multiple GPUs in order to compute partial results. This way stochastic methods were avoided and exact solutions for the linear systems could be obtained.

One issue that arose during the experiments is that solving linear systems for optical RFs using GPUs worsened the conditioning of the matrix $\mat{A}$ due to numerical issues. A workaround was to increase $\alpha$ which led to slightly worse test errors.

In practice, we therefore recommend stochastic gradient descent based algorithms that optimize a quadratic regression loss. These allow to work with large-dimensional features while requiring much less memory and giving more numerical stability. Our method is only intended for theoretical comparison and should be used when exact results are needed.

\end{document}